\renewcommand{\ni}{\noindent}
\newcommand{\bc}{\begin{center}}
\newcommand{\ec}{\end{center}}
\newcommand{\be}{\begin{enumerate}}
\newcommand{\ee}{\end{enumerate}}
\newcommand{\bi}{\begin{itemize}}
\newcommand{\ei}{\end{itemize}}
\newcommand{\ba}{\begin{array}}
\newcommand{\ea}{\end{array}}
\newcommand{\by}[1]{\stackrel{ #1 }{\longrightarrow}}
\newcommand{\eset}[1]{ \{ #1 \}}
\newcommand{\dia}[1]{\langle #1 \rangle}
\newcommand{\V}{\mathsf{V}}
\newcommand{\pff}{{\tt ff}}
\newcommand{\por}{\vee}
\newcommand{\pand}{\wedge}
\newcommand{\prtt}{{\tt tt}}
\newcommand{\pnot}{\neg}
\newcommand{\den}[1]{\|#1\|}
\newcommand{\dentv}[1]{\den{#1}_{\V}^{\T}}
\newtheorem{prop}{Proposition}
\newtheorem{defin}{Definition}
\newtheorem{theorem}[prop]{Theorem}
\newtheorem{fact}[prop]{Fact}
\def\Var{\mathrm{Var}}
\def\Prop{\mathrm{Prop}} 
\def\w{\emph}
\def\Act{\mathrm{Act}}
\def\T{\mathsf{T}}
\def\S{\mathsf{S}}
\def\rstrut{\vrule height 2ex depth .75ex width 0pt}
\def\proofrule#1#2{{\rstrut #1 \over \rstrut #2}} 
\def\@eqnnum{}
\def\@yeqncr{\@testopt\@xeqncr\@eqnskip}
\def\@eqnskip{1\jot}
\def\newspacing #1{\def\baselinestretch{#1}\ifx\@currsize\normalsize
                   \@normalsize \else \@currsize\fi}
\newdimen\ProofTreeovershoot \ProofTreeovershoot=0.1em 
\newdimen\ProofTreeannotspace \ProofTreeannotspace=0pt
\def\ProofTreestrut{\vrule depth 5.5pt height 12.5pt width 0pt }
\def\ProofTreerulestrut{\vrule depth 0pt height 0pt width 0pt }
\newbox\ProofTree\newskip\ProofTreespace\ProofTreespace=1em
\newdimen\ProofTreea\newdimen\ProofTreeb\newdimen\ProofTreec
\def\ProofTreedopremise#1{#1}
\def\PTirule{\aPTirule{\hskip-\ProofTreeannotspace}}
\def\PTiirule{\aPTiirule{\hskip-\ProofTreeannotspace}}
\def\PTiiirule{\aPTiiirule{\hskip-\ProofTreeannotspace}}
\def\gobblebrace#1{%
\def\gobblearg{#1}%
\afterassignment\agobblebrace\let\gobblething= }
\def\agobblebrace{%
\ifx\gobblething\ProofTreespace 
\let\next=\setskip
\else
\ifcat\bgroup\noexpand\gobblething
\let\next=\gobblearg
\else
\ifcat\space\noexpand\gobblething
\let\next=\bgobblebrace
\else
\def\next{\errmessage
{I was expecting optional spaces followed by a left
brace!}}%
\fi\fi\fi\next}
\def\bgobblebrace{\afterassignment\agobblebrace\let\gobblething=}
\def\setskip{\afterassignment\bgobblebrace\skip255}
\def\doannot#1{\setbox0=\vbox to 0pt{\vss\hbox{#1}\vss}}
\def\aPTiirule#1#2{%
\begingroup
\doannot{#1}
\setbox7=\hbox{\ProofTreestrut\ProofTreedopremise{#2}}\setbox8=\copy7
\gobblebrace{\bgroup\aftergroup\aPTiirulepartb}}
\def\aPTiirulepartb{%
\setbox1=\box\ProofTree
\dimen1=\ProofTreea \dimen2=\ProofTreeb \dimen3=\ProofTreec
\skip255=\ProofTreespace 
\gobblebrace{\bgroup\aftergroup\aPTiirulepartc}}
\def\aPTiirulepartc{%
\ProofTreespace=\skip255
\setbox2=\box\ProofTree
\dimen4=\ProofTreea \dimen5=\ProofTreeb \dimen6=\ProofTreec
\dimen0=\wd1
\advance\dimen0 -\dimen1
\advance\dimen0 -\dimen2
\advance\dimen0 \ProofTreespace
\advance\dimen0 \dimen4
\advance\dimen0 \dimen5
\advance\dimen0 \dimen6
\ifnum\wd7 <\dimen0
\dimen9=\dimen1
\advance\dimen9 \dimen2
\global\ProofTreea=\dimen9
\ProofTreeb=\dimen0
\advance\ProofTreeb -\wd7
\divide\ProofTreeb 100
\multiply\ProofTreeb \ProofTreebias
\global\ProofTreeb=\ProofTreeb
\setbox7=\hbox to\dimen0{\hskip\ProofTreeb\box7\hfil}
\else 
\advance\dimen0 -\wd7
\divide\dimen0 100 \multiply\dimen0 \ProofTreebias
\dimen9=\dimen1
\advance\dimen9 \dimen2
\advance\dimen9 \dimen0
\global\ProofTreea=\dimen9
\dimen0=\wd7
\global\ProofTreeb=0pt
\fi
\ifnum\dimen9 <0
\dimen8=-\dimen9
\global\ProofTreea=0pt
\else
\dimen8=0pt
\fi
\global\ProofTreec=\wd8
\global\setbox\ProofTree=\vtop{\offinterlineskip\halign
{\hskip\dimen8 ##\cr
\hskip\dimen9\box7\hfil\cr
\ProofTreerulestrut\hskip\dimen9\hskip-\ProofTreeovershoot\advance\dimen0 2\ProofTreeovershoot
\vrule width\dimen0 depth0pt height 0.4pt\hskip\ProofTreeannotspace\box0\hskip-\ProofTreeovershoot
\hfil\cr
\box1\hskip\ProofTreespace\box2\hfil\cr}}%
\endgroup}%
\def\PTaxiom#1{\global\setbox\ProofTree=\vtop{\hbox{\ProofTreestrut\ProofTreedopremise{#1}}}%
\global\ProofTreea=0pt\global\ProofTreeb=0pt\global\ProofTreec=\wd\ProofTree
}
\def\aPTirule#1#2{
\begingroup
\doannot{#1}
\setbox7=\hbox{\ProofTreestrut\ProofTreedopremise{#2}}\setbox8=\copy7
\gobblebrace{\bgroup\aftergroup\aPTirulepartb}}
\def\aPTirulepartb{%
\setbox1=\box\ProofTree
\dimen1=\ProofTreea \dimen2=\ProofTreeb \dimen3=\ProofTreec
\dimen0=\dimen3 
\ifnum\wd7 <\dimen0
\dimen9=\dimen1
\advance\dimen9 \dimen2
\global\ProofTreea=\dimen9
\ProofTreeb=\dimen0
\advance\ProofTreeb -\wd7
\divide\ProofTreeb 100
\multiply\ProofTreeb \ProofTreebias
\global\ProofTreeb=\ProofTreeb
\setbox7=\hbox to\dimen0{\hskip\ProofTreeb\box7\hfil}
\else
\advance\dimen0 -\wd7
\divide\dimen0 100 \multiply\dimen0 \ProofTreebias
\dimen9=\dimen1
\advance\dimen9 \dimen2
\advance\dimen9 \dimen0
\global\ProofTreea=\dimen9
\dimen0=\wd7
\global\ProofTreeb=0pt
\fi
\ifnum\dimen9 <0
\dimen8=-\dimen9
\global\ProofTreea=0pt
\else
\dimen8=0pt
\fi
\global\ProofTreec=\wd8 
\global\setbox\ProofTree=\vtop{\offinterlineskip\halign
{\hskip\dimen8 ##\cr
\hskip\dimen9\box7\hfil\cr
\ProofTreerulestrut\hskip\dimen9\hskip-\ProofTreeovershoot\advance\dimen0 2\ProofTreeovershoot
\vrule width\dimen0 depth0pt height 0.4pt \hskip\ProofTreeannotspace\box0\hskip-\ProofTreeovershoot
\hfil\cr\box1\hfil\cr}}%
\endgroup}%
\def\aPTiiirule#1#2{%
\begingroup
\doannot{#1}
\setbox7=\hbox{\ProofTreestrut\ProofTreedopremise{#2}}\setbox8=\copy7
\gobblebrace{\bgroup\aftergroup\aPTiiirulepartb}}
\def\aPTiiirulepartb{%
\setbox1=\box\ProofTree
\dimen1=\ProofTreea \dimen2=\ProofTreeb \dimen3=\ProofTreec
\gobblebrace{\bgroup\aftergroup\aPTiiirulepartc}}
\def\aPTiiirulepartc{%
\setbox3=\box\ProofTree
\gobblebrace{\bgroup\aftergroup\aPTiiirulepartd}}
\def\aPTiiirulepartd{%
\setbox2=\box\ProofTree
\dimen4=\ProofTreea \dimen5=\ProofTreeb \dimen6=\ProofTreec
\dimen0=\wd1
\advance\dimen0 -\dimen1
\advance\dimen0 -\dimen2
\advance\dimen0 \ProofTreespace
\advance\dimen0 \wd3
\advance\dimen0 \ProofTreespace
\advance\dimen0 \dimen4
\advance\dimen0 \dimen5
\advance\dimen0 \dimen6
\ifnum\wd7 <\dimen0
\dimen9=\dimen1
\advance\dimen9 \dimen2
\global\ProofTreea=\dimen9
\ProofTreeb=\dimen0
\advance\ProofTreeb -\wd7
\divide\ProofTreeb 100
\multiply\ProofTreeb \ProofTreebias
\global\ProofTreeb=\ProofTreeb
\setbox7=\hbox to\dimen0{\hskip\ProofTreeb\box7\hfil}%
\else 
\advance\dimen0 -\wd7
\divide\dimen0 100 \multiply\dimen0 \ProofTreebias
\dimen9=\dimen1
\advance\dimen9 \dimen2
\advance\dimen9 \dimen0
\global\ProofTreea=\dimen9
\dimen0=\wd7
\global\ProofTreeb=0pt
\fi
\ifnum\dimen9 <0
\dimen8=-\dimen9
\global\ProofTreea=0pt
\else
\dimen8=0pt
\fi
\global\ProofTreec=\wd8
\global\setbox\ProofTree=\vtop{\offinterlineskip\halign
{\hskip\dimen8 ##\cr
\hskip\dimen9\copy7\hfil\cr
\ProofTreerulestrut\hskip\dimen9\hskip-\ProofTreeovershoot\advance\dimen0 2\ProofTreeovershoot
\vrule width\dimen0 depth0pt height 0.4pt \hskip\ProofTreeannotspace\box0\hskip-\ProofTreeovershoot
\hfil\cr
\box1\hskip\ProofTreespace\box3\hskip\ProofTreespace\box2\hfil\cr}}%
\endgroup}%
\def\tree{%
\gobblebrace{\bgroup\aftergroup\treepartb}}
\def\treepartb{%
\box\ProofTree
}
\title{A  Proof System with Names for Modal Mu-calculus\footnote{To Dave who 
I first met in 1982 when we shared 
an office in Edinburgh where I learnt about denotational
semantics, least fixpoints and Edinburgh pubs.}}
\author{
    Colin Stirling
    \institute{School of Informatics\\ University of Edinburgh}\\
    \email{cps@inf.ed.ac.uk}}
\begin{document}

\maketitle

\begin{abstract}
Fixpoints are an important ingredient in semantics, abstract interpretation
and program logics. Their addition to a logic
can add considerable expressive power.
One general  issue is how to define 
proof systems for such logics. Here we examine proof systems for
modal logic with fixpoints \cite{Koz83}. We  present a  tableau proof system
for checking validity of formulas 
which uses names to keep track of unfoldings of fixpoint variables 
as devised in  \cite{Ju09}.
\end{abstract}

\section{Introduction}
Fixpoints are an important ingredient in semantics, abstract interpretation
and program logics. Their addition to a logic
can add considerable expressive power.
One general  issue is how to define 
proof systems for such logics. 
In this paper we consider modal mu-calculus, modal logic with fipoints,
see \cite{BS07} for a survey. 
Dave Schmidt  has used this logic to understand data flow analyis
\cite{Sch}.
Here our interest is more with developing \w{proof systems} for the logic. 

In this paper we describe a tableau proof system which checks when
a modal mu-calculus formula is valid. The system 
uses names to keep track of unfoldings of fixpoint variables. 
This idea originated in \cite{StW9} in the context of  model checking. 
For satisfiability
checking it  was used in \cite{LS01} for LTL and CTL and then for
modal mu-calculus  in \cite{Ju09}. 

In Section~\ref{secmodal} we describe the syntax and semantics of 
modal mu-calculus and in Section~\ref{secproof} we briefly examine 
approaches to  devising proof systems for this logic. The tableau proof
system based on names for checking  valid formulas is then presented in
Section~\ref{secnames} and shown to be both sound and complete.

\section{Modal Mu-calculus}
\label{secmodal}
Let $\Var$ be an (infinite) set of \w{variable names}, typically
indicated by $Z, Y, \ldots$; let $\Prop$ be a set of \w{atomic
propositions}, typically indicated by $P,Q,\ldots$; and let $\Act$ be a set
of \w{actions}, typically indicated by $a,b,\ldots$. The set of modal
mu-calculus formulas $\mu M$ (with respect to $\Var, \Prop, \Act$) is as 
follows.

\[ 
\phi \; ::= \; Z \;\mid\; P \;\mid\; \pnot\phi \;\mid\;
\phi_{1} \pand  \phi_{2} \;\mid\; [a] \phi  \;\mid\;  \nu Z. \phi \]
In $\nu Z. \phi$ 
every free occurrence of $Z$ in $\phi$ occurs positively, that is  within
the scope of an even number of negations. 
If a formula is written as $\phi(Z)$, it is to be understood that the
subsequent writing of $\phi(\psi)$ means $\phi$ with $\psi$ substituted for all
free occurrences of $Z$. 

The positivity requirement on the fixpoint operator is a syntactic
means of ensuring that $\phi(Z)$ denotes a functional monotonic in $Z$,
and so has unique minimal and maximal fixed points. It is usually  
more convenient to
introduce derived dual operators, and work in
positive form: $\phi_1 \por  \phi_2$ means $\pnot(\pnot\phi_1 \pand  \pnot\phi_2)$,  $\langle a\rangle\phi$ means $\pnot[a]\pnot\phi$
and $\mu Z.\phi(Z)$ means $\pnot\nu Z.\pnot\phi(\pnot Z)$. 
A formula is in \w{positive form} if it is
written with the derived operators so that $\pnot$ only occurs applied to
atomic propositions. It is in \w{positive normal form} if in addition
all bound variables are distinct. Any closed formula can be put into positive 
normal form. It is also useful to have derived propositional constants
$\prtt$ (for $P \por \neg P$) and $\pff$ (for $P \pand \neg P$). 

A modal mu-calculus \w{structure} $\T$ (over $\Prop, \Act$) is a labelled
transition system, namely a set $\S$ of states and a family of transition
relations $\by{a} \, \subseteq  \S \times \S$ for  $a \in \Act$,  
together with an interpretation $\V_\Prop\colon \Prop \to  2^{\S}$ for
the atomic propositions.  As usual we write $s \by{a} t$ for 
$(s,t) \in \,  \by{a}$. 

Given a structure $\T$ and an interpretation $\V \colon \Var \to 2^{\S}$ 
of the variables, the set $\dentv{\phi}$ of states satisfying a formula
$\phi$ is defined as follows:
\begin{eqnarray}
\dentv{P} &{}={}& \V_\Prop(P) \\
\dentv{Z}  &{}={}& \V(Z) \\
\dentv{\pnot\phi} &{}={}& \S - \dentv{\phi} \\
\dentv{\phi_1 \pand  \phi_2} &{}={}& \dentv{\phi_1} \cap  \dentv{\phi_2} \\
\dentv{[a]\phi} &{}={}& \eset{s \, | \, \forall t.
\mbox{if } s \by{a} t \mbox{ then }  t \in  \dentv{\phi} }\\
\dentv{\nu Z.\phi} &{}={}& \bigcup \big\{ S \subseteq  \S \, | \, 
 S \subseteq  \den{\phi}^\T_{\V[Z:=S]} \big\}
\end{eqnarray}
where $\V[Z:=S]$ is the valuation which maps $Z$ to $S$ and otherwise
agrees with $\V$. If we are working in positive normal form, we may
add definitions for the derived operators by duality (and for the propositional
constants). 
\begin{eqnarray}
\dentv{\phi_1 \por  \phi_2} &{}={}& \dentv{\phi_1} \cup  \dentv{\phi_2} \\
\dentv{\langle a\rangle\phi} &{}={}& \eset{s \, | \, \exists t.s \by{a} t \pand  t \in  \dentv{\phi} }\\
\dentv{\mu Z.\phi} &{}={}& \bigcap \big\{ S \subseteq  \S \, | \, 
 S \supseteq  \den{\phi}^\T_{\V[Z:=S]} \big\} \\
\dentv{\prtt} &{}={}& \S \\
\dentv{\pff}  &{}={}& \emptyset
\end{eqnarray}

If we take the usual lattice structure on $2^{\S}$, given by set inclusion, and 
if $f$ is a monotonic function then by the Knaster-Tarski theorem $f$ has fixed 
points, and indeed has a unique maximal and a unique minimal fixed point.
The maximal fixed point is the union of
\w{post-fixed points}, $\bigcup \eset{ S \subseteq \S \, | \, 
S \subseteq f(S)}$, and the minimal fixed point 
is the intersection of \w{pre-fixed points}, $\bigcap \eset{ S \subseteq \S \, | \,
f(S) \subseteq S }$. These determine the meanings of $\nu$ and $\mu$ in $\mu M$.
 
Moreover, the standard theory of fixpoints tells 
that if $f$ is a monotone function on a
lattice, we can construct its minimal fixed point  by applying
$f$ repeatedly on the least element  of the lattice
to form an increasing
chain,  whose limit is the least fixed point.
Similarly,  the maximal fixed point is constructed
by applying $f$ repeatedly on the largest element
to form a decreasing chain, whose limit is the maximal  fixed point.
The stages of these iterations  can be introduced syntactically as 
$\mu^{\alpha} Z. \phi$ and $\nu^{\alpha} Z. \phi$ for ordinals $\alpha$
whose meanings are  as follows when  $\lambda$ is a limit ordinal.

\begin{eqnarray}
\dentv{\mu^{0} Z. \phi} &{}={}& \emptyset \\
\dentv{\nu^{0} Z. \phi} &{}={}& \S \\
\dentv{\mu^{\beta + 1} Z. \phi}  &{}={}& \dentv{\phi(\mu^{\beta}Z.\phi)} \\
\dentv{\nu^{\beta + 1} Z. \phi}  &{}={}&\dentv{\phi(\nu^{\beta}Z.\phi)} \\
\dentv{\mu^{\lambda}Z.\phi} &{}={}& \bigcup_{\beta < \lambda} 
\dentv{\mu^{\beta} Z. \phi} \\
\dentv{\nu^{\lambda}Z.\phi} &{}={}& \bigcap_{\beta < \lambda} 
\dentv{\nu^{\beta} Z. \phi} \\
\end{eqnarray} 

\begin{defin} The formula $\phi$ of $\mu M$ is \w{valid} if for all
structures $\T$ and interpretations $\V$, $\dentv{\phi} = \S$.
The formula $\phi$ is \w{satisfiable} if there is a structure
$\T$ and an interpretation $\V$ such that $\dentv{\phi} \not= \emptyset$.
\end{defin}

\ni
As is standard $\models \phi$ indicates that $\phi$ is valid and 
$s \in \dentv{\phi}$ is written as  $s \models_{(\T,\V)} \phi$, dropping the
index $(\T,\V)$ wherever possible. 

The relationship between stages of iteration and
the fixpoints is formally described.

\begin{fact}
\label{fact1}   
\begin{enumerate}
\item $s \models \nu Z. \phi$ iff  $s \models \nu^{\alpha} Z. \phi$ for all 
ordinals $\alpha$.
\item $s \models \mu Z. \phi$ iff $s \models \mu^{\alpha} Z. \phi$ for some
ordinal $\alpha$.  
\end{enumerate}
\end{fact}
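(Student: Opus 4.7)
The plan is to reduce both clauses to standard transfinite Knaster--Tarski / Kleene machinery, the only semantic input being monotonicity of the functional $F(S) \eqdef \den{\phi}^{\T}_{\V[Z:=S]}$ on the powerset lattice $2^{\S}$. First I would establish monotonicity of $F$ by a straightforward structural induction on $\phi$, exploiting that $Z$ occurs only positively (so every path from the root of $\phi$ to a free occurrence of $Z$ traverses an even number of $\pnot$'s, and each clause of the semantics preserves or dualises inclusion accordingly). With monotonicity in hand, the clauses of the definition of $\nu^{\alpha}$ and $\mu^{\alpha}$ just iterate $F$ transfinitely from $\S$ (respectively $\emptyset$).

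For clause (1), I would prove both directions separately. For ($\Rightarrow$), given $s \models \nu Z.\phi$, I show by transfinite induction on $\alpha$ that $\dentv{\nu Z.\phi} \subseteq \dentv{\nu^{\alpha} Z.\phi}$: the base case $\alpha=0$ is trivial because $\dentv{\nu^{0}Z.\phi}=\S$; at a successor $\beta+1$ the induction hypothesis combined with monotonicity of $F$ gives $\dentv{\nu Z.\phi}=F(\dentv{\nu Z.\phi})\subseteq F(\dentv{\nu^{\beta}Z.\phi})=\dentv{\nu^{\beta+1}Z.\phi}$, where the first equality is Knaster--Tarski (the greatest post-fixed point is itself a fixed point); at a limit $\lambda$ the claim is immediate from the intersection clause of the semantics. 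For ($\Leftarrow$) I would argue by cardinality: the sequence $\dentv{\nu^{\alpha} Z.\phi}$ is weakly decreasing, hence stabilises at some ordinal $\alpha^{*}$ of cardinality at most $|\S|$. The stable value $I=\dentv{\nu^{\alpha^{*}} Z.\phi}$ satisfies $I=F(I)$, so in particular $I\subseteq F(I)$, placing $I$ inside the union of post-fixed points which is $\dentv{\nu Z.\phi}$ by clause (6). Hence $s\in\bigcap_{\alpha}\dentv{\nu^{\alpha}Z.\phi}\subseteq I\subseteq\dentv{\nu Z.\phi}$.

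Clause (2) is entirely dual: the sequence $\dentv{\mu^{\alpha} Z.\phi}$ is weakly increasing from $\emptyset$, stabilises at some $\alpha^{*}$, and its union is the least pre-fixed point, which is $\dentv{\mu Z.\phi}$. The two directions mirror those above, now using that the intersection of pre-fixed points is itself a (pre-)fixed point.

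The only genuinely nontrivial point is the ($\Leftarrow$) direction for $\nu$ (dually for $\mu$): one needs the stabilisation argument to convert membership in every approximant into membership in the stable limit, which is then recognised as a (post-)fixed point. Everything else is bookkeeping via transfinite induction and monotonicity of $F$, so I would expect the whole argument to fit in a few lines once monotonicity of $F$ is stated as an auxiliary lemma.
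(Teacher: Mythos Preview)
Your argument is correct and is exactly the standard transfinite Knaster--Tarski/Kleene reasoning one would expect. Note, however, that the paper does not actually prove this statement: it is labelled a \emph{Fact} and treated as a well-known result about ordinal approximants of monotone operators on complete lattices, with no proof given beyond the informal discussion immediately following it. So there is no proof in the paper to compare your proposal against; what you have written is precisely the standard justification that the paper is implicitly invoking.
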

So for a minimal fixpoint formula 
$\mu Z.\phi$, if $s$ satisfies the 
fixpoint, it satisfies some iterate, say the
$\beta+1\,$th so that
$s \models \mu^{\beta+1} Z.\phi$. 
Now if we \w{unfold} this formula once, we get
$s \models  \phi(\mu^{\beta} Z.\phi)$. Therefore, 
the fact that $s$ satisfies the fixpoint
depends, via $\phi$, on the fact that other states in $\S$
satisfy the fixpoint
\emph{at smaller iterates than $s$ does}. So if one follows a chain
of dependencies, the chain terminates. 
Therefore, $\mu$ means
`finite looping'.
On the other hand, for a maximal fixpoint $\nu X.\phi$, there is no such
decreasing chain: $s \models  \nu Z.\phi$ iff 
$s \models \nu^{\beta} Z.\phi$ for every iterate
$\beta$ iff $s \models  \phi(\nu^{\beta} Z.\phi)$ for every iterate
$\beta$
iff $s \models  \phi(\nu  Z.\phi)$, and so we may loop
for ever. 

We impose a further syntactic constraint on formulas. In the following we write
$\sigma Z. \phi$ for $\mu Z. \phi$ or $\nu Z.\phi$ when we are indifferent to 
which  fixpoint.

\begin{defin}
The formula $\gamma$ of $\mu M$ is \w{guarded} if for any subformula
$\sigma Z. \phi$ of $\gamma$, every occurrence of $Z$ in $\phi$ is within 
the scope of a modal operator. 
\end{defin}

\ni
The following is standard; see \cite{Koz83,NiWa96,Wal00}.

\begin{fact} 
\label{guarded} Every formula of $\mu M$  is equivalent to a guarded
formula.
\end{fact}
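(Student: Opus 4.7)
The plan is to prove the fact by structural induction on formulas in positive normal form, using an auxiliary transformation that eliminates unguarded occurrences of the bound variable from inside each fixpoint. For atomic formulas the claim is immediate, and the boolean and modal cases pass through by the congruence of semantic equivalence under $\pand,\por,[a],\dia{a}$ (which is routine from the clauses of $\den{\cdot}$). The only interesting case is a fixpoint $\sigma Z.\phi$. By the induction hypothesis, $\phi\equiv\phi'$ for some guarded $\phi'$, and by congruence of $\equiv$ under fixpoints, $\sigma Z.\phi\equiv\sigma Z.\phi'$. It remains to remove any unguarded occurrences of the \emph{outer} variable $Z$ from $\phi'$, without breaking guardedness of the subformulas already in $\phi'$.

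To do this, introduce a fresh variable $Y$ and let $\phi^u(Y,Z)$ be obtained from $\phi'$ by replacing every unguarded occurrence of $Z$ by $Y$. Then $\phi'(Z) = \phi^u(Z,Z)$; in $\phi^u(Y,Z)$ every occurrence of $Y$ is outside the scope of any modality, while every remaining occurrence of $Z$ is inside such a scope. I claim
\[
\nu Z.\phi^u(Z,Z)\;\equiv\;\nu Z.\phi^u(\prtt,Z),\qquad
\mu Z.\phi^u(Z,Z)\;\equiv\;\mu Z.\phi^u(\pff,Z).
\]
Since substituting $\prtt$ or $\pff$ for $Y$ introduces no new occurrences of $Z$ and preserves guardedness of fixpoints already inside $\phi^u$, the right-hand side in each case is a guarded formula equivalent to $\sigma Z.\phi$, which closes the induction.

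The key observation behind the equivalences is that because $Y$ occurs only at unguarded positions in $\phi^u(Y,Z)$, the truth value of $\phi^u(Y,Z)$ at a state $s$ depends on $Y$ only through the single bit ``$s\in\den{Y}$''. Consequently, for any valuation of $Z$ and any set $T$, if $s\in T$ then $s\models\phi^u(T,Z)\iff s\models\phi^u(\prtt,Z)$, while if $s\notin T$ then $s\models\phi^u(T,Z)\iff s\models\phi^u(\pff,Z)$. For the $\nu$-equivalence, take $T=\dentv{\nu Z.\phi^u(\prtt,Z)}$, so $T\subseteq\phi^u(\S,T)$; for $s\in T$ the observation gives $s\in\phi^u(T,T)$, so $T$ is a post-fixed point of $Z\mapsto\phi^u(Z,Z)$ and $T\subseteq\dentv{\nu Z.\phi^u(Z,Z)}$; the reverse inclusion is immediate from monotonicity. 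For the $\mu$-equivalence, take $T=\dentv{\mu Z.\phi^u(\pff,Z)}$ and show $\phi^u(T,T)\subseteq T$ by case analysis on $s\in\phi^u(T,T)$: either $s\in T$ already, or $s\notin T$, in which case the observation yields $s\models\phi^u(\pff,T)=T$, a contradiction; the reverse inclusion is again by monotonicity.

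The main obstacle is precisely this $\mu$-case step, where the naive monotonicity argument (used for the other three inclusions) fails and one needs the propositional-dependence observation to break the apparent circularity $\phi^u(T,T)\subseteq T$. Once that is in hand the rest is bookkeeping: verifying that $\phi^u$ is well-defined up to $\alpha$-renaming, checking that the equivalences above lift through the surrounding context by congruence, and observing that nested fixpoints of $\phi^u(\prtt,Z)$ and $\phi^u(\pff,Z)$ inherit guardedness from the inductive hypothesis on $\phi'$.
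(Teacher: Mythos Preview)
The paper does not actually prove this fact: it merely cites it as standard with references to \cite{Koz83,NiWa96,Wal00}. So there is no ``same approach'' to compare against; your argument must stand on its own.

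It does not quite. The gap is in your ``key observation'': you assert that because $Y$ occurs only at unguarded positions of $\phi^u(Y,Z)$, the truth of $\phi^u(Y,Z)$ at a state $s$ depends on the interpretation of $Y$ only through the bit ``$s\in\den{Y}$''. That is true when every unguarded occurrence of $Y$ lies outside \emph{every} fixpoint binder as well, i.e.\ at the purely propositional top level of $\phi^u$. But ``unguarded'' only means ``not in the scope of a modality'', and the inductive hypothesis (that $\phi'$ is guarded) constrains only the \emph{bound} variables of inner fixpoints, not the free variable $Z$. So $Z$ may sit unguarded inside an inner fixpoint of $\phi'$, and then your observation fails. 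Concretely, take $\phi'(Z)=\nu X.(Z\pand [a]X)$, which is guarded. Then $\phi^u(Y,Z)=\nu X.(Y\pand [a]X)$, and $\dentv{\phi^u(T,Z)}$ is the set of states all of whose $a$-paths stay inside $T$, whereas $\dentv{\phi^u(\prtt,Z)}=\S$. With $T=\{s\}$ and a transition $s\by{a}t$ for some $t\neq s$, we have $s\in T$ yet $s\not\models\phi^u(T,Z)$ while $s\models\phi^u(\prtt,Z)$. Your post-fixed-point argument for $T\subseteq g(T,T)$ therefore breaks.

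The fix is standard and mild: before replacing unguarded $Z$'s, unfold each inner fixpoint that contains an unguarded occurrence of $Z$ once (iterating through the fixpoint nesting depth). Since those inner fixpoints are guarded by the inductive hypothesis, one unfolding pushes the recursive copy under a modality, and after finitely many unfoldings every unguarded occurrence of $Z$ is genuinely at the propositional top level. At that point your observation is correct and the rest of your argument goes through. Alternatively, strengthen the induction hypothesis to produce a guarded equivalent in which all \emph{free} variables also occur unguarded only outside inner fixpoints; this is what the cited references effectively do.
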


\section{Proof Systems}
\label{secproof}

There has been a variety of proof systems for $\mu M$. Kozen presented
an equational deductive system which is equivalent to the  
Henkin  axiom system  of Figure~\ref{axiom} that extends 
the standard modal logic $K$ \cite{Koz83}: here $\phi \rightarrow \psi$
means $\pnot \phi \por \psi$. 
\begin{figure}
\[ \mbox{axioms and rules for minimal multi-modal logic K} \]
\[ \phi(\mu X.\phi(X)) \rightarrow  \mu X.\phi(X)\]
\[ \proofrule{\phi(\psi) \rightarrow \psi}{\mu X.\phi(X) \rightarrow \psi} \]
\caption{Kozen's axiomatisation of $\mu M$}
\label{axiom}
\end{figure}
There is an extra axiom for a  least fixed point that its ``unfolding''
implies it; and  Park's fixed point induction rule 
which says that $\mu$ is indeed the least pre-fixed point.
The duals of this axiom and rule for greatest fixed points are;
$\nu X. \phi(X) \rightarrow \phi(\mu X.\phi(X))$ and if
$\psi \rightarrow \phi(\psi)$ then $\psi \rightarrow
\nu X.\phi(X)$. Despite the naturalness of this axiomatisation, Kozen was
unable to show that it was complete in \cite{Koz83}. 
Instead, he proved it 
complete for a subset of $\mu M$, the 
aconjunctive fragment. Subsequently, he provided a complete infinitary
deductive system for the whole of $\mu M$  by adding the following 
infinitary rule \cite{Koz86}.

\[  \proofrule{\mu^{n} X. \phi(X)  \rightarrow \psi \mbox{ for all } n < \omega}
{\mu X.\phi(X) \rightarrow \psi} \]
Soundness of this rule depends on the \w{finite model property} which is
that  a formula
is satisfiable if, and only if,  it is satisfiable in a finite model. 
It is possible to devise an infinite structure (with infinite branching)
with  state
$s$ such that, for instance, $s \models \mu X. [a] X$ and 
$s \not\models \mu^{n} X. [a] X$ for all $n <
\omega$.  
 
Later Walukiewicz established that indeed  Kozen's axiomatisation 
in Figure~\ref{axiom} is 
complete for the whole language. The proof appeals to 
a normal form, \w{disjunctive normal form}, 
inspired by automata and semantic tableaux 
and also uses  (a slightly weakened version of)
aconjunctivity \cite{Wal00}. First, it is shown that 
every formula is \w{provably} equivalent to a guarded formula
(thereby strengthening Fact~\ref{guarded}).
For any  unsatisfiable weakly aconjunctive or disjunctive normal 
form formula $\phi$ there is a proof of $\pnot \phi$. 
Then the central  argument 
proceeds by induction on formulas  showing that every guarded
formula provably implies a semantically equivalent disjunctive normal form
formula. 
This unusual proof method for showing completeness
can be contrasted with the more  standard technique of building a model
out of consistent sets of formulas (which has remained elusive
for $\mu M$). 

Given a valid formula such as $\nu Z. \mu X. [a]Z \por \dia{a} X$
it is not so easy to provide a proof of it within Kozen's  axiom  system. 
This suggests that one may also seek natural deduction,  sequent or
tableau  style
proof  systems.  A  \w{goal directed} 
proof system is presented in Figure~\ref{fig2}.
\begin{figure}
\[  \Gamma, P, \pnot P  \ \ \ \ \ \ \  \Gamma, \prtt \]
\[ \proofrule{\Gamma, \phi \por \psi}{\Gamma, \phi, \psi} \ \ \ \ \ \
\proofrule{\Gamma, \phi \pand \psi}{\Gamma, \phi \ \ \ \ \ \ \ \ \ \Gamma, \psi}
\]
\[\proofrule{\Gamma, \dia{a} \Sigma, [a] \psi}
{\Sigma,\psi} \]
\[ \proofrule{\Gamma, \nu Z. \phi(Z)}{\Gamma, \phi(\nu Z. \phi(Z))} \ \ \ \ \ \
\proofrule{\Gamma, \mu Z. \phi(Z)}{\Gamma, \phi(\mu Z. \phi(Z))} \]
\caption{Goal directed proof rules} 
\label{fig2}
\end{figure}
A sequent of this proof system is a set of formulas understood
disjunctively; we assume $\Gamma, \Sigma, \ldots$ indicate a \w{set}
of formulas and $\Gamma, \phi, \psi$  is the set
$\Gamma \cup \eset{ \phi,\psi }$; clearly, $\Gamma, P, \neg P$ and $\Gamma,
\prtt$ are then valid. The rules remove $\por$ between formulas and branch
at an $\pand$. Some notation in the modal rule: $\dia{a} \Sigma$
is the set of formulas $\eset{ \dia{a}\phi \, | \, \phi \in \Sigma}$.
In its application the set $\Sigma$ can be empty.
Fixpoint formulas are unfolded. The idea is to build
a proof for a starting guarded formula $\gamma$ in positive normal form.
Such systems have been presented before. For instance,
in \cite{NiWa96} there is a dual system
for showing that a formula is unsatisfiable.
There are also systems, such as in \cite{DHL06,JKS08,Stu08},  
where the rules are
inverted. 

The main problem with the rules in Figure~\ref{fig2} is
that they lead to  infinite depth proof trees  as in
Figure~\ref{fig3}.  It is unclear when such a tree is in
fact a proof; for instance, there are such trees
for invalid formulas such as 
$\mu X. [a]X \por \dia{a} X$.  
\begin{figure}
\[ \tree{
\PTirule{$\nu Z. \mu X. [a] Z \por \dia{a} X$}
{\PTirule{$\mu X. [a] (\nu Z. \mu X. [a] Z \por \dia{a} X) \por \dia{a} X$}
{\PTirule{$[a](\nu Z. \mu X. [a] Z \por \dia{a}X) \por \dia{a}( 
\mu X. [a] (\nu Z. \mu X. [a] Z \por \dia{a} X) \por \dia{a} X )$}
{\PTirule{$[a](\nu Z. \mu X. [a] Z \por \dia{a}X),  \dia{a}( 
\mu X. [a] (\nu Z. \mu X. [a] Z \por \dia{a} X) \por \dia{a} X)$}
{\PTirule{$\nu Z. \mu X. [a] Z \por \dia{a} X,  \mu X. [a] (\nu Z. \mu X. 
[a] Z \por \dia{a} X) \por \dia{a} X$}
{\PTirule{$\mu X. [a] (\nu Z. \mu X. [a] Z \por \dia{a} X) \por \dia{a} X$}
{\PTaxiom{$\vdots \vdots$}}}}}}}}
\]
\caption{A never ending proof tree} 
\label{fig3}
\end{figure}
One solution is to replace  infinite depth  proofs with proofs of infinite
width by adopting a variant
of  Kozen's infinitary rule. 
In \cite{JKS08, Stu08} the authors add 
an infinitary rule as follows (again whose soundness
depends on the finite model property).  

\[ \proofrule{\Gamma, \nu Z. \phi(Z)}{\Gamma,\nu^{1} Z. \phi(Z) \ \ \
\ldots \ \ \ \Gamma, \nu^{n}Z. \phi(Z) \ \ \ \ldots} \]
 
\[ \proofrule{\Gamma, \nu^{1} Z. \phi(Z)}{\Gamma, \phi(\prtt)}
\ \ \ \ \ \ \ \
\proofrule{\Gamma, \nu^{n+1} Z. \phi(Z)}{\Gamma, \phi(\nu^{n} Z. \phi(Z))} 
\]
Every branch in a successful  proof tree thereby is  finite and finishes
at a sequent $\Gamma, \prtt$ or $\Gamma, P, \neg P$. For
instance, Figure~\ref{fig30} illustrates part of the proof tree
for $\nu Z. \mu X. [a] Z \por \dia{a} X$.
\begin{figure}
\[ Z^{i} = \nu^{i} Z. \mu X. [a] Z \por \dia{a} X  \ \mbox{ for $i > 0$} \]
\[ \tree{
\PTiiirule{$\nu Z. \mu X. [a] Z \por \dia{a} X$}
{\PTirule{$Z^{1}$}
{\PTirule{$\mu X. [a] \prtt \por \dia{a} X$}
{\PTirule{$[a] \prtt \por \dia{a}(\mu X. [a] \prtt  \por \dia{a} X)$}
{\PTirule{$[a] \prtt, \dia{a}(\mu X. [a] \prtt  \por \dia{a} X)$}
{\PTaxiom{$\prtt$}}}}}}
{\PTirule{$Z^{2}$}
{\PTirule{$\mu X. [a]Z^{1} \por \dia{a} X$}
{\PTirule{$[a] Z^{1} \por \dia{a}(\mu X. [a] Z^{1}  \por \dia{a} X)$}
{\PTirule{$[a] Z^{1}, \dia{a}(\mu X. [a] \prtt  \por \dia{a} X)$}
{\PTirule{$Z^{1}$}
{\PTaxiom{$\vdots$}}}}}}}
{\PTirule{$Z^{i+1}$}
{\PTirule{$\vdots$}
{\PTirule{$Z^{i}$}
{\PTaxiom{$\vdots$}}}}}} \]
\caption{An infinitely wide proof tree} 
\label{fig30}
\end{figure}

Alternatively, one can accept infinite depth proofs but find a finite 
way of generating or recognising  them. 
Extra criteria for deciding when an infinite tree labelled with sets
of formulas is indeed a proof are necessary. In particular, we need to guarantee
(see comments after Fact~\ref{fact1}) that in any infinite branch
a greatest    fixpoint 
formula is  unfolded infinitely often. 
In \cite{NiWa96} the authors  add the  extra 
mechanism of an   infinite  game that plays
over an  infinite tree. 
In \cite{DHL06} 
for linear time mu-calculus the extra mechanism is a nondeterministic parity
automaton that runs over the tree. 

What we shall do is to show that indeed there is a means for obtaining a 
finite proof using names. This mechanism was introduced in \cite{Ju09}
as a  tableau decision procedure for showing
satisfiability of $\mu M$ formulas. Here we reformulate it as a proof system
for showing when a formula is valid.

\section{Proof System with Names}
\label{secnames}
Our aim is now to build a proof system such that a formula  
has a finite proof tree
if, and only if, it is valid. The proof system  includes 
some auxiliary naming notation.
Assume a starting guarded closed formula $\gamma$ in positive
normal form. 

\begin{defin}
If in $\gamma$ the subformula $\sigma_{1} Z. \psi$ is
a proper subformula of $\sigma_{2} Y. \phi$ then $Y$ is \w{more outermost} 
than $Z$
(in $\gamma$).  Variable $X$ is a 
\w{variable} in $\gamma$ if $\sigma X.\psi$ 
is a subformula of $\gamma$ and it is a \w{$\nu$-variable} if $\sigma$
is $\nu$.
\end{defin} 

We assume a fixed linear ordering  $X_{1}, \ldots, X_{m}$
on all the distinct  variables in $\gamma$ such that if $X_i$
is more outermost than $X_j$ then $i < j$. 
For instance, in a linear ordering for variables in  
$(\nu Z. \mu X. [a] Z \por \dia{a} X) \pand \mu Y. [a] Y$ 
the  $\nu$-variable $Z$ must occur before $X$ whereas $Y$ can 
occur before or after it. 
For each $\nu$-variable $Z$ in $\gamma$ we assume a finite set 
$\eset{z_{1}, z_{2}, \ldots, z_{l}}$
of \w{names} for $Z$ where $l$ is the length of $\gamma$.

The proof system has sequents of the form $w \vdash \Gamma$ where
$w$ is a sequence of distinct names for $\nu$-variables and
each element of $\Gamma$ has the form  $\phi^{u}$ 
where $\phi$ is a formula (belonging to the closure of $\gamma$)  
and $u$ is a  subsequence of $w$. The initial sequent is
$\vdash \gamma$ with the empty sequence of names.
If $v = n_{1} \ldots n_{k}$ is a sequence of names then $v(i)$,
$1 \leq i \leq k$, is the element $n_{i}$. 

\begin{defin}
Assume  $X_{1}, \ldots, X_{m}$ is the fixed  linear ordering  of
variables  in $\gamma$  and $u, v, w$ are sequences of names
of these variables where  $u, v$ are subsequences of $w$.
\begin{enumerate}
\item We write $u <_{w} v$ if for some $j$, 
(1) $u(j)$ and $v(j)$ are names of the same variable
and $u(j)$ occurs before $v(j)$ in $w$, and (2) $u(i) = v(i)$
for all $i < j$.
\item  The sequence $u \restriction X_i$ is the subsequence of
$u$ that omits all names of the variables $X_{i+1}, \ldots, X_n$.
\item We write $u \sqsubset_{w} v$  if $u <_{w} v$ or there is a $\nu$-variable
$X_{i}$ such that $v \restriction X_{i}$ is a proper prefix of $u \restriction
X_{i}$.
\end{enumerate}
\end{defin}

The proof rules in Figure~\ref{fig4} 
are an elaboration of those in Figure~\ref{fig2}.
Again, sets of formulas are to be understood disjunctively; now formulas also
carry  sequences of names reflecting the history of 
unfoldings of greatest fixpoints.
\begin{figure}
\[ w \vdash  \Gamma, P^{u}, \pnot P^{v}  \ \ \ \ \ \ \ \ \ \  \ \  \ \ 
w \vdash \Gamma, \prtt^{u} \]
\[ \proofrule{w \vdash \Gamma, \phi \por \psi^{u}}{w \vdash \Gamma, \phi^{u}, 
\psi^{u}} \ \ \ \ \ \ \ \ \ \ \ \ \ 
\proofrule{w \vdash \Gamma, \phi \pand \psi^{u}}
{w \vdash \Gamma, \phi^{u} \ \ \ \ \ \ \ \ \ w \vdash \Gamma, \psi^{u}}
\]
\[\proofrule{w \vdash \Gamma, \dia{a} \Sigma, [a] \psi^u}
{w' \vdash \Sigma,\psi^u} \ \ \ \ \ \ \ \ \ \ \ \  \ \ \ 
\proofrule{w \vdash \Gamma, \mu Z. \phi(Z)^{u}}{w' \vdash \Gamma, 
\phi(\mu Z. \phi(Z))^{u \restriction Z}} \]
\[ \proofrule{w \vdash \Gamma, \nu Z. \phi(Z)^u}
{w'z_{i} \vdash \Gamma, \phi(\nu Z. \phi(Z))^{(u \restriction Z)z_{i}} } \
z_{i} \mbox{ is the first name for $Z$ not occurring  in $w$}
\]
\caption{Goal directed proof rules with names} 
\label{fig4}
\end{figure}
The $\por$ and  $\pand$ rules are similar to before; the names index
is passed to the components. In the modal rule
we assume that $\dia{a} \Sigma$
is the set of formulas $\eset{ \dia{a}\phi^{u} \, | \, \phi^{u} \in \Sigma}$;
in an  application $\Sigma$ can be empty. Some further notation:
$w'$ in the conclusion of the modal rule (and in other rules)
is the subsequence of names in  $w$ that still occur in $\Sigma$ and $u$;
names that occurred only in formulas in the premises $\Gamma$ are removed from
$w$. Fixpoint formulas are unfolded; names in $u$ that belong to variables that
are more innermost than $Z$ are removed from $u$ (and from $w$ if they do not
occur in $\Gamma$). In the case of a greatest  fixpoint a new name for $z$ is
also added to the name sequence (both in $w'$ and $u \restriction Z$). 
Importantly, there are also two key structural rules in Figure~\ref{fig5}.
\begin{figure}

\[ \mbox{Thin} \ \proofrule{w \vdash \Gamma, \phi^{u}, \phi^{v}}{w' \vdash \Gamma, \phi^{u}} \
u \sqsubset_{w} v \]

\[ \mbox{Reset$_z$} \  \proofrule{w \vdash \Gamma, \phi_{1}^{uzz_{1}u_{1}},
\ldots, \phi_{k}^{u z z_{k} u_{k}}}
{w' \vdash \Gamma,\phi_{1}^{uz}, \ldots, \phi_{k}^{uz}}\ \mbox{$z$ does not
occur in $\Gamma$} \]

\caption{Structural  proof rules} 
\label{fig5}
\end{figure}
If $\phi^{u}$ and $\phi^{v}$ both occur in  a sequent $w \vdash \Sigma$ then
either $u \sqsubset_{w} v$ or $v \sqsubset_{w} u$.  In the case of the 
rule Reset$_{z}$ the names $z, z_{1}, \ldots, z_{k}$
are names for the same variable $Z$  and $z_{i}$ could be the same as $z_{j}$.
When applying the proof rules of Figures~\ref{fig4} and \ref{fig5} we assume
that the structural rules have  priority over the logical rules.

\begin{defin}
\label{leaf}
A node $n$ of a tree labelled with the sequent
$w \vdash \Gamma$ is a \emph{leaf} if
there is a node $m$ above it, its \emph{companion}, 
labelled with the same sequent
$w \vdash \Gamma$; this leaf is \emph{successful} if between nodes
$n$ and $m$ there is an application  of the rule Reset$_{z}$ for some $z$
such that for any node $n'$ labelled with $w' \vdash \Sigma$ between
and including $n$ and $m$ the name $z$ occurs in $w'$.
\end{defin}

\begin{defin}
A \emph{proof
tree} for $\gamma$ is a tree where
\begin{enumerate}
\item the root is labelled $\vdash \gamma$,
\item any  other node is labelled with a  sequent that is the result of an 
application of a rule in  Figure~\ref{fig4} or \ref{fig5} to the sequent
at its parent node, 
\item each leaf is labelled with a sequent
that is  an instance of an  axiom in Figure~\ref{fig4}
or is successful according to  the repeat condition. 
\end{enumerate}
\end{defin}

\ni
A tree is not a proof if it has a leaf labelled with a
sequent of the form 
\[ w \vdash P_{1}^{u_{1}}, \ldots, P_{k}^{u_{k}}, \pnot Q_{1}^{v_{1}},
\ldots, \pnot Q_{l}^{v_{l}}, \dia{a_{1}} \Sigma_{1}, \ldots, \dia{a_{m}} 
\Sigma_{m} \]
where $Q_{j} \not= P_{i}$ for all $i,j$ or has a leaf $n$ that is a repeat
because of its companion  $m$  and for every application of a rule 
Reset$_{z}$ between $m$ and $n$ there is a node $n'$ between (and including) 
$n$ and $m$ labelled $w \vdash \Sigma$ such that $z$ does not occur in $w$. 
Given a formula $\gamma$ there are at most $2^{ | \gamma | }$ different subsets
of subformulas of $\gamma$ where $| \gamma |$ is the size of $\gamma$.
The number of greatest fixpoints in $\gamma$ is also bounded by $| \gamma |$.
The number of different possible sequents derivable from $\vdash \gamma$
is bounded by $2^{O(|\gamma|^{2} |log(\gamma)|)}$, see \cite{Ju09}, which is therefore
also a bound on the depth of a tree.  Moreover, the width of  a tree
is bounded by $2$.  The only rule that allows choice is the modal rule;
the number of choices is again bounded by $|\gamma|$. 
Therefore, the  number of possible trees with root $\vdash \gamma$ is 
bounded in terms of $| \gamma |$. 

\begin{fact}
\label{finitefact}
For any closed guarded $\gamma$ there are only boundedly  many 
trees
for $\gamma$ and each such tree has boundedly many nodes (where the bounds
are functions  of $|\gamma|$).
\end{fact}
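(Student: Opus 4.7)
The plan is to bound two quantities simultaneously: the size of any single derivation tree for $\gamma$, and the number of distinct such trees. Both reduce to the key combinatorial observation that only finitely many sequents $w \vdash \Gamma$ can be built from the ingredients fixed by $\gamma$. First I would enumerate those ingredients. Every formula occurring in $\Gamma$ lies in the closure of $\gamma$ under subformulas and fixpoint unfoldings, which has size $O(|\gamma|)$; the $\nu$-variables of $\gamma$ number at most $|\gamma|$, and for each we fixed a pool of $|\gamma|$ names, giving $O(|\gamma|^2)$ names in total. Since $w$ is a sequence of \emph{distinct} names and each annotation $u$ attached to a formula in $\Gamma$ is a subsequence of $w$, the collection of possible sequents is finite and, as made explicit in \cite{Ju09}, bounded by $2^{O(|\gamma|^{2}\log|\gamma|)}$.

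To bound tree size I would then exploit the repeat condition of Definition~\ref{leaf}: as soon as a sequent reoccurs along a branch, that branch terminates at a leaf. Hence the depth of every branch is at most the number of distinct sequents, which is a function of $|\gamma|$. The width is trivially at most two since the only branching rule is the $\pand$-rule. Consequently the total node count of any proof tree is bounded by a function of $|\gamma|$.

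Finally, to bound the number of distinct trees I would identify the choice points in the rule system. The axioms and the $\por$, fixpoint, Thin and Reset$_z$ rules are all effectively deterministic once a target formula (or pair of formulas) has been selected, and with structural rules having priority the choice among them is further constrained; the genuinely non-deterministic rule is the modal rule, which picks one boxed formula $[a]\psi^u$ to follow, and the number of such options is at most $|\gamma|$. Since every rule offers at most $O(|\gamma|)$ choices and a tree has only boundedly many nodes, the total number of trees rooted at $\vdash\gamma$ is also bounded in terms of $|\gamma|$. The main obstacle I anticipate is making the sequent-count of the first paragraph fully precise — in particular controlling the combinatorics of subsequences $u$ attached independently to each formula, and verifying that the Thin and Reset$_z$ choices really do contribute only polynomially many branchings — but these are exactly the estimates established in \cite{Ju09}, so a clean appeal to that reference is enough to finish the argument.
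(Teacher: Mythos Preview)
Your proposal is essentially the same as the paper's own argument: bound the number of possible sequents by $2^{O(|\gamma|^{2}\log|\gamma|)}$ via \cite{Ju09}, use the repeat condition to cap branch depth, note that only the $\pand$-rule branches so width is at most~$2$, and observe that the modal rule is the only genuine source of nondeterminism with at most $|\gamma|$ options. The paper is slightly terser---it does not discuss possible choice in Thin or Reset$_z$ at all---but the structure and the appeal to \cite{Ju09} are identical.
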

 
\begin{figure}
\[ Z = \nu Z. \mu X. [a] Z \por \dia{a} X  \ \ \ \ \  \ \    
X = \mu X. [a] Z \por \dia{a} X
\]

\[ \tree{
\PTirule{$\vdash Z$}
{\PTirule{$z_{1} \vdash X^{z_1}$}
{\PTirule{$z_{1} \vdash ([a] Z  \por \dia{a} X)^{z_{1}}$}
{\PTirule{$z_{1} \vdash [a] Z^{z_{1}},  
\dia{a} X^{z_{1}}$}
{\PTirule{$z_{1} \vdash Z^{z_{1}},  X^{z_{1}}$}
{\aPTirule{Thin}{$z_{1} z_{2} \vdash X^{z_{1} z_{2}}, X^{z_{1}}$}
{\aPTirule{Reset$_{z_{1}}$}{$z_{1} z_{2} \vdash X^{z_{1} z_{2}}$}
{\PTaxiom{$z_{1} \vdash X^{z_{1}}$}}}}}}}}}
\]
\caption{A proof tree} 
\label{fig6}
\end{figure}
In Figure~\ref{fig6} there is a proof tree for the valid formula
$\nu Z. \mu X. [a] Z \por \dia{a} X$ where we employ the abbreviations
that $Z$ is this formula and $X$ is it's subformula 
$\mu X. [a] z \por \dia{a} X$.
It is a  proof tree because of the repeat sequent 
$z_{1} \vdash X^{z_{1}}$ with an application of Repeat$_{z_{1}}$ inbetween
where $z_{1}$ is a name that occurs in each sequent throughout. 
The proof tree for a  more complex  valid formula $X \por Z$ is 
illustrated in 
\begin{figure}
\[ \begin{array}{ll} 
X = \nu X. \dia{a}X \pand Y \ \ \ \ \  \ \ &   
Z = \nu Z. [a] Z \por W \\
Y = \mu Y. \dia{a} Y \por P  & W = \mu W. [a] W \por \neg P
\end{array} \]
\[ \tree{
{\PTirule{$\vdash X, Z$}
{\PTirule{$x_{1} \vdash (\dia{a} X \pand Y)^{x_1}, Z$}
{\PTirule{$x_{1} z_{1} \vdash (\dia{a} X \pand Y)^{x_1},
([a] Z  \por W)^{z_{1}}$}
{\PTiirule{$x_{1} z_{1} \vdash (\dia{a} X \pand Y)^{x_1}, [a] Z^{z_{1}},  
W^{z_{1}}$}
{\PTaxiom{T1}}
{\PTirule{$x_{1} z_{1} \vdash Y^{x_{1}}, [a] Z^{z_{1}}, W^{z_{1}}$}
{\PTirule{$x_{1} z_{1} \vdash (\dia{a} Y \por P)^{x_{1}}, [a] Z^{z_{1}}, W^{z_{1}}$}
{\PTirule{$x_{1} z_{1} \vdash \dia{a} Y^{x_{1}}, P^{x_{1}}, [a] Z^{z_{1}}, W^{z_{1}}$}
{\PTiirule{$x_{1} z_{1} \vdash \dia{a} Y^{x_{1}}, P^{x_{1}}, [a] Z^{z_{1}}, 
([a]W \pand \neg P)^{z_{1}}$}
{\PTirule{$x_{1} z_{1} \vdash  \dia{a} Y^{x_{1}}, P^{x_{1}}, [a] Z^{z_{1}}, 
[a]W^{z_{1}}$}
{\PTirule{$x_{1} z_{1} \vdash Y^{x_{1}}, Z^{z_{1}}$}
{\PTirule{$x_{1} z_{1} z_{2} \vdash Y^{x_{1}}, ([a]Z \por W)^{z_{1} z_{2}}$}
{\aPTirule{Reset$_{z_{1}}$}{$x_{1} z_{1} z_{2} \vdash Y^{x_{1}}, [a]Z^{z_{1}z_{2}}, W^{z_{1} z_{2}}$}
{\PTaxiom{$x_{1} z_{1} \vdash Y^{x_{1}}, [a]Z^{z_{1}},  W^{z_{1}}$}}}}}}
{\PTaxiom{$x_{1} z_{1} \vdash \dia{a}Y^{x_{1}}, P^{x_{1}}, [a] Z^{z_{1}},  
\neg P^{z_{1}}$} }}}}}}}}}} \]

\[ \tree{
{\PTirule{T1}
{\PTirule{$x_{1} z_{1} \vdash \dia{a} X^{x_{1}}, [a] Z^{z_{1}},  W^{z_{1}}$}
{\PTiirule{$x_{1} z_{1} \vdash \dia{a} X^{x_{1}}, [a] Z^{z_{1}},  
([a] W \pand \neg P)^{z_{1}}$}
{\PTirule{$x_{1} z_{1} \vdash \dia{a} X^{x_{1}}, [a] Z^{z_{1}}, [a]  W^{z_{1}}$}
{\PTirule{$x_{1} z_{1} \vdash X^{x_{1}}, Z^{z_{1}}$}
{\aPTirule{Reset$_{x_{1}}$}{$x_{1} z_{1}x_{2} \vdash (\dia{a}X \pand Y)^{x_{1} x_{2}}, Z^{z_{1}}$}
{\PTirule{$x_{1} z_{1} \vdash (\dia{a} X \pand Y)^{x_{1}}, 
Z^{z_{1}}$}
{\aPTirule{Reset$_{z_{1}}$}{$x_{1} z_{1} z_{2}\vdash (\dia{a} X \pand Y)^{x_{1}}, 
([a] Z \por W)^{z_{1} z_{2}}$}
{\PTaxiom{$x_{1} z_{1} \vdash (\dia{a} X \pand Y)^{x_{1}}, 
([a] Z \por W)^{z_{1}}$}}}}}}}
{\PTirule{$x_{1} z_{1} \vdash \dia{a} X^{x_{1}}, [a] Z^{z_{1}}, \neg P^{z_{1}}$}
{\PTirule{$x_{1} z_{1} \vdash X^{x_{1}}, Z^{z_{1}}$}
{\aPTirule{Reset$_{x_{1}}$}{$x_{1} z_{1}x_{2} \vdash (\dia{a}X \pand Y)^{x_{1} x_{2}}, Z^{z_{1}}$}
{\PTirule{$x_{1} z_{1} \vdash (\dia{a} X \pand Y)^{x_{1}}, 
Z^{z_{1}}$}
{\aPTirule{Reset$_{z_{1}}$}{$x_{1} z_{1} z_{2}\vdash (\dia{a} X \pand Y)^{x_{1}}, 
([a] Z \por W)^{z_{1} z_{2}}$}
{\PTaxiom{$x_{1} z_{1} \vdash (\dia{a} X \pand Y)^{x_{1}}, 
([a] Z \por W)^{z_{1}}$}}}}}}}}}}}
\]
\caption{A proof tree} 
\label{fig7}
\end{figure}
Figure~\ref{fig7}. We encourage the reader to check that indeed
it is a  proof tree.

At the  cost of increasing the size of trees, 
we can add further  conditions on  when a node  counts
as a leaf in   Definition~\ref{leaf}: for instance, an extra
requirement is that  its sequent
is  the result of an 
application of the modal rule. 

\begin{theorem}
For any closed guarded $\gamma$, $\models \gamma$ iff there is a proof tree
for  $\gamma$. 
\end{theorem}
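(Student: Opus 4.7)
I would prove the two directions separately using standard tableau-style arguments, with the name machinery playing the role that a parity or Streett acceptance condition plays in automata-based approaches to $\mu M$.

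For soundness, suppose a proof tree for $\gamma$ exists but $\not\models\gamma$; fix a countermodel $s_0 \not\models \gamma$. Call a sequent $w\vdash\Gamma$ \emph{refuted at} $s$ if $s\not\models\phi$ for every $\phi^u\in\Gamma$. First I would verify that each rule preserves refutation from parent to some child: if the parent (the sequent above the line) is refuted at $s$, then some child (the sequent below the line) is refuted at $s$, or at an $a$-successor of $s$ in the modal case. For the fixpoint rules this uses the unfolding identities $\sigma Z.\phi \equiv \phi(\sigma Z.\phi)$; for Thin it uses that removing a duplicate formula does not change the set's meaning; for Reset$_z$ it is immediate since the formulas themselves are unchanged. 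Then, starting at the root and repeatedly descending into a refuted child, jumping from each leaf back to its companion, I build an infinite path of refutations; no axiom can appear on this walk, since axioms are valid. Next I identify an infinite trace of tagged formulas along this path whose outermost fixpoint variable is unfolded cofinally. The successful-leaf condition---that some Reset$_z$ fires inside the loop and $z$ persists in every intermediate name sequence---forces this variable to be a $\nu$-variable, so by Fact~\ref{fact1}(1) the root sequent is actually satisfied at $s_0$, contradicting the choice of $s_0$. The hard part is showing successfulness really pins down a $\nu$-trace: this requires carefully tracking how name sequences evolve under Thin, Reset, unfolding and modal rules, and leaning on the priority given to the structural rules.

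For completeness, suppose $\models\gamma$, and construct a canonical candidate tree by applying the rules in a fixed systematic order (structural first, then logical) and declaring each node a leaf as soon as a companion ancestor appears or an axiom is reached. Fact~\ref{finitefact} guarantees termination with a finite tree. It remains to show that every non-axiom leaf is \emph{successful}. I would argue contrapositively: suppose a leaf $n$ and its companion $m$ form an unsuccessful repeat. Then unfolding the loop between $m$ and $n$ infinitely and stitching together the $a$-successors forced at each application of the modal rule, I construct an infinite transition structure, and extract an outermost trace around the loop which is a $\mu$-trace: unsuccessfulness means that for every $\nu$-name $z$ involved in a Reset$_z$ inside the loop, some intermediate node kills $z$, so the only variable whose unfolding genuinely survives each traversal is a $\mu$-variable. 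By Fact~\ref{fact1}(2) this gives an infinite strictly descending chain of $\mu$-iteration ordinals at the root state of the constructed structure, which is impossible unless that state actually refutes $\gamma$---contradicting $\models\gamma$. The main obstacle here is the countermodel construction: one must choose ordinal bounds at each loop traversal so that the chosen trace's iteration ordinals genuinely decrease, which requires that the $\mu$-trace is not shielded by a surrounding $\nu$-unfolding that would reset it, and this is precisely what the priority of the structural rules together with the failure of every Reset$_z$ inside the loop ensures.
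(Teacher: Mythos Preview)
Your overall skeleton is right, but both halves have a real gap at exactly the point where the name machinery is supposed to do its work.

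For soundness, the sentence ``forces this variable to be a $\nu$-variable, so by Fact~\ref{fact1}(1) the root sequent is actually satisfied at $s_0$'' is not a correct inference: Fact~\ref{fact1}(1) does not turn an infinite $\nu$-trace into a satisfaction claim. What you actually need is that every formula on the trace is \emph{refuted}, so at each regeneration of $\nu Z.\psi$ some approximant $\nu^{\alpha}Z.\psi$ is refuted with $\alpha$ strictly decreasing, and that is the contradiction. More seriously, you have not said why the successful-repeat condition produces such a $\nu$-trace at all; this is the entire combinatorial content of the names, and ``leaning on the priority given to the structural rules'' is not an argument. The paper does something different and cleaner: it attaches an ordinal $\alpha_i$ to each name $n_i$ in $w$, interprets each $\nu$-variable in a tagged formula by the approximant indexed by its last name in $u$, maintains the invariant that later names for the same variable in the same tag carry strictly smaller ordinals, and shows that along any refuted branch an application of Reset$_z$ strictly decreases the ordinal attached to $z$. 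A successful repeat then returns to the same $w$ with a lexicographically smaller ordinal vector; iterating the loop gives an infinite descent. No single trace need be isolated.

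For completeness, building one canonical tree and focusing on one unsuccessful leaf--companion pair does not give a countermodel to $\gamma$: the loop between $m$ and $n$ at best refutes the sequent at $m$, but $m$ is not the root, and you have no control over the $\wedge$- and modal choices on the path from the root down to $m$ (nor over the other box-formulas discarded at those modal steps). The paper's construction is genuinely different: at each modal node it keeps \emph{all} box-children simultaneously, shows that if every child of a valid modal node were invalid one could already glue the child countermodels into a countermodel of the parent (so at least one child stays valid), and only then---using the hypothesis that no proof tree exists---prunes at $\wedge$-nodes to select the non-proof side. The resulting pruned tree, with back-edges at the (necessarily unsuccessful) repeats and with the invalid modal children replaced by their local countermodels, \emph{is} the global countermodel; every formula occurring in a state's region is then shown to fail there by the trace argument you sketched. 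Your single-loop construction cannot replace this global one, and in particular it does not handle dead-end leaves of the form $w\vdash P_1^{u_1},\ldots,\neg Q_1^{v_1},\ldots,\langle a_1\rangle\Sigma_1,\ldots$ with no box formula present.
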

\begin{proof}
Assume $\models \gamma$ but there is not a proof tree for $\gamma$.
We show that we can build a countermodel to $\gamma$; a structure
$\T$ and  a state $s$ of $\T$ such that
$s \not\models \gamma$. 
Given a sequent $w \vdash \Gamma$ it is valid if 
$\models \bigvee \eset{ \phi \, | \, \exists u. \phi^{u} \in \Gamma}$.
The initial sequent $\vdash \gamma$ is valid.
We now build a tree using the proof rules where each node
is labelled with a valid sequent (or, as we shall see, a countermodel)
and except for the root node
is the result of an application of a  proof rule.
Assume we have built part of the tree and consider a current leaf
labelled with a valid sequent; if it is not an axiom or a repeat
then the tree can be extended with further valid sequents. This is clear
if we can apply a structural rule of Figure~\ref{fig5} which has priority
and it is also clear for $\pand$, $\por$ and the fixpoint
rules of Figure~\ref{fig4};
in all these cases if the premise sequent is valid 
then so are the conclusion sequents. 
We next come to the modal rule. We assume it is only applied if no other 
rule applies. Then a leaf of the current tree is labelled with a valid
sequent of the form
\[ (*) \ \ w \vdash P_{1}^{u_{1}}, \ldots, P_{k}^{u_{k}}, \pnot Q_{1}^{v_{1}},
\ldots, \pnot Q_{l}^{v_{l}}, \dia{a_{1}} \Sigma_{1}, \ldots, \dia{a_{m}} 
\Sigma_{m}, [b_{1}]\psi_{1}^{w_{1}}, \ldots, [b_{p}] \psi_{p}^{w_{p}} \]
where each $\Sigma_{i}$ is nonempty, $a_{i} \not= a_{j}$ when $i \not= j$
and we assume it is not an axiom, 
so $P_{i} \not= Q_{j}$ for all $i,j$. A  possible conclusion 
of an application of  the modal rule has the  form $w' \vdash 
\Sigma_{i}, \psi_{j}^{w_{j}}$ when  $a_{i} = b_{j}$ or $w' \vdash \psi_{j}^{w_{j}}$
when $b_{j}$ is different from each $a_{i}$. 
With our tree we allow \w{all} such possible applications.
For each such application if the sequent is not valid we let the node be 
a leaf and we associate a countermodel to it: that is, a structure
$\T_{ij}$ and a state $s_{ij}$ such that
$s_{ij} \not\models \bigvee \eset{ \phi \, | \exists u. \phi^{u}
\in \Sigma_{i}} \por \psi_{j}$ or a structure $\T_{j}$ and  a state $s_{j}$
such that $s_{j} \not\models \psi_{j}$. If all
possible applications of the rule are invalid, including the case
when $p =0$ in $(*)$, then we obtain a contradiction
by  constructing  a countermodel to the valid premise
$(*)$ as follows. For $\T$ we take the disjoint
union of each  $\T_{ij}$ and of each $\T_{j}$ together with a new state $s$.
For each $a_{i}$ such that $\neg \exists b_{j}. a_{i} = b_{j}$ assume there
is not  a transition of the form $s \by{a_{i}} s'$. Otherwise,
we let $s \by{a_{i}} s_{ij}$ of $\T_{ij}$ and $s \by{b_{j}} s_{j}$
of $\T_{j}$. 
Finally, we assume $s \not\in \V_{\Prop}(P_{i})$ and $s \in \V_{\Prop}(Q_{j})$
for each $i,j$.  Clearly, by construction, $s$ fails to satisfy each  formula
in $(*)$. Any node of the tree
labelled with a sequent of the form  $(*)$ is called a \emph{modal}
node. Therefore, there is at least one child node labelled with a valid
sequent of a modal node. For each such child we continue to extend the tree.
The tree building  eventually stops when nodes are leaves either because
they are children of a modal node labelled with an invalid sequent or
nodes 
labelled with an axiom or a repeat node.  In the last case we assume that we
restrict  repeat nodes to be children  of modal nodes.  
All nodes of the tree except for some successors of modal
nodes are labelled with valid sequents. 
However, by assumption there is not a proof tree for $\gamma$.
We now prune the tree. Starting top down, at any node where $\pand$ is applied
we choose one of the successor nodes which fails to produce a
proof tree; we discard the subtree
of  the other successor.
The result is a finite tree
where the  only branching is at modal nodes.
All leaves are either unsuccessful repeats or children of modal nodes
labelled with invalid sequents (and with associated countermodels). 
From this tree we build a countermodel
to $\gamma$. We identify as states any region of the tree starting at the
root or at a child of a modal node labelled with a valid sequent down to,
and including, the next modal node. In the case of a leaf that is  a 
repeat we assume that there is a backward edge to its companion node
above. If a state $s$ finishes at
the modal node labelled with the sequent $(*)$ then 
for each $a_{i}$ such that $\neg \exists b_{j}. a_{i} = b_{j}$ assume there
is not  a transition of the form $s \by{a_{i}} s'$. Otherwise, for each
child that is labelled with an invalid sequent 
we let $s \by{a_{i}} s_{ij}$ of the countermodel
$\T_{ij}$ or  $s \by{b_{j}} s_{j}$
of the countermodel $\T_{j}$. For any child labelled 
with valid sequent $w' \vdash 
\Sigma_{i}, \psi_{j}^{w_{j}}$ when  $a_{i} = b_{j}$ whose associated state is
$s'$ we assume a transition $s \by{a_{i}} s'$
or any child $w' \vdash \psi_{j}^{w_{j}}$ whose associated state is $s'$
we assume a transition $s \by{b_{j}} s'$: the associated state of a repeating
leaf is that of its companion (the target of the backedge).  
Finally, we assume $s \not\in \V_{\Prop}(P_{i})$ and $s \in \V_{\Prop}(Q_{j})$
for each $i,j$. We say that $\phi \in s$ if $\exists u. \phi^{u}$ belongs to 
some sequent in the region associated with $s$. The proof is completed by
showing that if $\phi \in s$ then  in the countermodel $s \not\models \phi$. 
Assume to the contrary that for some $s$ and $\phi$, $\phi \in s$ and
$s \models \phi$.
Clearly, then $\phi$ is not a literal, an atomic formula or the negation of an 
atomic formula. For a formula $\phi \in s$ we can follow it through the tree,
passing between states and jumping from a leaf to its companion. If $\phi_{1}
\pand \phi_{2} \in s$ then by construction $\phi_{1} \in s$ or $\phi_{2}
\in s$. If $\phi_{1} \por \phi_{2} \in s$ then we can choose
between  $\phi_{1} \in s$ and $\phi_{2} \in s$. If $\dia{a} \phi \in s$
then we look at the modal node associated with $s$: if there is not
a $t$ such tht $s \by{a} t$ or only countermodels under $a$-transitions
to $\phi$ then $s \not\models \dia{a} \phi$. Otherwise, we can choose a
$t$ such that $s \by{a} t$ and $\phi \in t$. Similarly, for $[b] \psi \in s$.
If $\sigma Z. \phi \in s$ then $\phi(\sigma Z. \phi) \in s$. 
Therefore, if we follow $\phi \in s$ for $s \models \phi$
we obtain a finite or infinite
sequence $\phi_{1} \in s_{1}, \phi_{2} \in s_{2}, \ldots, \phi_{n} \in s_{n}$
where $\phi_{1} = \phi$, $s_{1} = s$, there is a state transition
when $\phi$ is a modal formula and for all $i$, $s_{i} \models \phi_{i}$.
Clearly, the sequence cannot be finite  ending at a literal or a modal formula.
So, the sequence must be infinite. We show that the outermost fixpoint unfolded
infinitely often is  a least fixpoint which is a contradiction by
Fact~\ref{fact1}. For suppose it is a greatest fixpoint $\nu Z. \psi$:
then the sequence of formulas  must have a subsequence
of the form $\ldots, \nu Z. \psi^{u},
\psi(\nu Z. \psi)^{u'z}, \ldots, 
\nu Z. \psi^{u'zu_{1}}, \psi(\nu Z. \psi)^{u'z z_{i}}, \ldots, \nu Z. \psi^{u'zu_{2}}$
where Reset$_{z}$ is applied and $z$ is defined throughout: that is, 
the sequence  must pass through  a successful
repeat.

For soundness, assume that there is a proof tree
for $\gamma$ but $\not\models \gamma$. Therefore, there is
a proof tree with root labelled $ \vdash \gamma$ all of whose leaves
are either labelled with axioms or are successful repeats. 
A sequent $w \vdash \Gamma$ is \emph{not}  valid if $\not\models \bigvee
\eset{ \phi \, | \, \exists u. \phi^{u} \in \Gamma}$. 
First, if the premise of an application of a rule is not valid then
so is a conclusion.  This is clear for the structural rules,  for the
$\por$ rule and the fixpoint rules. In the case of $\pand$, 
if the premise sequent is not valid then one of the successor
sequents is not valid. In the case of the modal rule, if
$\models \bigvee \Sigma \por \Psi$ then by standard modal reasoning
$\models \phi \vee \dia{a} \Sigma \vee [a] \psi$; so, if the premise sequent
is not valid then neither is the conclusion in an application of the modal rule.
Next we refine the argument by adding ordinal information. If $\not\models
\nu Z. \phi$ then using Fact~\ref{fact1} there is a least ordinal $\alpha$,
a countermodel $\T$ and  a state $s$ of $\T$ such that
$s \not\models \nu^{\alpha} Z.\phi$. To do this, we slightly change the rules
(as in fact used in Figures~\ref{fig6} and \ref{fig7}) by letting variables 
abbreviate the fixpoint subformulas of $\gamma$.
\[ \proofrule{w \vdash \Gamma, \sigma Z. \phi(Z)^{u}}{w \vdash \Gamma, Z^{u}}
 \ \ \ \ \ \ 
\proofrule{w \vdash \Gamma, Z^{u}}{w' \vdash \Gamma, \phi(Z)^{u \restriction Z}}
\ Z \mbox{ is } \mu Z. \phi(Z) \]
\[ \proofrule{w \vdash \Gamma, Z^u}
{w'z_{i} \vdash \Gamma, \phi(Z)^{(u \restriction Z)z_{i}} } \
z \mbox{ is } \nu Z. \phi(Z) \mbox{ and }
z_{i} \mbox{ is the first name for $Z$ not occurring  in $w$}
\]
So, formulas can contain variables. 
We  associate ordinals
with sequents by adding ordinals to names. Assume an invalid
sequent $w \vdash \Gamma$ where $w = n_{1}, \ldots, n_{k}$.
We extend $w$ to pairs $(n_{1}, \alpha_{1}), \ldots, (n_{k},\alpha_{k})$
where each $\alpha_{i}$ is an ordinal: if $\phi^{u} \in \Gamma$
and $u$ contains a name for $Z$ then the meaning of $Z$
in $\phi^{u}$ is $\nu^{\alpha_{i}} Z. \psi$ when $Z$ is $\nu Z. \psi$
and  where $z_{i}$ is the 
last name for $z$ in $u$. We assume that the invalid sequent
$w \models \Gamma$ remains invalid when greatest
fixpoint subformulas are so interpreted.
We maintain the following invariant in an ordinal sequence:
if $w = (n_{1}, \alpha_{1}), \ldots, (n_{k},\alpha_{k})$, $i < j$ and 
$n_{i}, n_{j}$ name the same variable $Z$ such that there is a formula
$\phi^{u}$ such that $n_{i}, n_{j}$ both occur in $u$ then 
$\alpha_{i} > \alpha_{j}$. Moreover, we assume lexicographic ordering on 
ordinal sequences: if $w = (n_{1}, \alpha_{1}), \ldots, (n_{k},\alpha_{k})$
and $w' = (n_{1}, \beta_{1}), \ldots, (n_{k},\beta_{k})$ then
$w < w'$ if for some $j$, $\alpha_{j} < \beta_{j}$ and for 
all $i < j$, $ \alpha_{i} = \beta_{i}$.  We are interested
in a least ordinal interpretation which makes  $w \vdash \Gamma$ 
invalid. Moreover, if a proof rule is applied to such a sequent
then a conclusion is invalid under the ordinal interpretation;
we minimise the ordinal sequence which makes the conclusion invalid with 
respect to the lexicographical ordering.  
This is clear for the $\por$, Thin, $\pand$, modal, $\sigma Z$
and least fixpoint variable $Z$ 
(where we lose ordinals for any inner $X$ such that $Z > X$) rules.  
In the case of the maximal fixpoint variable  rule with premise 
$w \vdash \Gamma, Z^{u}$ if there is no name for $Z$ in $u$ then we know that
there is a least $\alpha$ such that 
$w' (z_{i}, \alpha)  \vdash \Gamma,  \phi(Z)^{(u \restriction  Z)z_{i}}$
is invalid where $z_{i}$ is a new name for $Z$. Otherwise,  there is a name
for $Z$ in $u$; suppose the last one is $z_{j}$ with ordinal $\alpha_{j}$.
Since the fixpoint is unfolded we know that we can decrease the meaning
of $Z^{u}$ by at least one; so for the invalid   conclusion 
$w' (z_{i}, \alpha)  \vdash \Gamma,  \phi(Z)^{(u \restriction  Z)z_{i}}$
$\alpha < \alpha_{j}$.  Finally, we turn to the Reset$_{z}$ rule with
premise $w \vdash \Gamma, \phi_{1}^{uzz_{1}u_{1}},
\ldots, \phi_{k}^{u z z_{k} u_{k}}$ where $z$ does not occur in $\Gamma$
and $z, z_{1}, \ldots, z_{k}$ name the same variable.
In  $w$ we have $(z, \alpha)$ and later $(z_{1}, \alpha_{1}), \ldots,
(z_{k}, \alpha_{k})$ (in any order). By the invariant property
it follows that $\alpha > \alpha_{i}$ for each $i$ and that 
$Z$ of $\phi_{j}$ has meaning $\nu^{\beta_{j}} Z. \phi$ for 
$\beta_{j} \leq \alpha_{j}$
(as $u_{j}$ may contain further names for $Z$). Let $\beta = 
\mathrm{min}\eset{\alpha_{1}, \ldots, \alpha_{k}}$. Clearly, we can
replace $(z,\alpha)$ in $w$ with $(z, \beta)$, remove all the names
$z_{i} u_{i}$ such that $w' \vdash \Gamma,\phi_{1}^{uz}, \ldots, \phi_{k}^{uz}$
is invalid.  Given a proof tree for $\gamma$ we now follow a branch
of invalid sequents down the tree minimising their ordinal
interpretations of variables. Clearly, we cannot reach a leaf $w \vdash \Gamma, P^{u},
\neg P^{v}$ or $w \vdash \Gamma, \prtt^{u}$ as these sequents are valid.
Moreover, we cannot reach a successful repeat $w \vdash \Gamma$ with
an application of Reset$_{z}$ in between when  $z$ is in each sequent 
throughout.
Consider the companion node with ordinal interpretation 
$w = (n_{1}, \alpha_{1}), \ldots, (n_{k}, \alpha_{k})$
and the leaf node with interpretation $w' = (n_{1}, \beta_{1}), \ldots, 
(n_{k}, \beta_{k})$: it follows that $w ' < w$ as at least the entry for $z$
was reduced by the Reset$_{z}$ rule which is a contradiction. 
\end{proof}

\section{Conclusion}

We have presented a sound and complete proof system  for checking validity
of modal mu-calculus formulas. However, it relies on auxiliary notation for
names that keep track of  unfoldings of greatest fixpoints. 

We  tried,  but failed,  to
see if this  method is able to underpin a different proof of completeness
of Kozen's axiomatisation  than  Walukiewicz's 
proof by induction.  

An alternative framework for deciding satisfiability and validity for
$\mu M$ is automata-theoretic  \cite{StE89}. Using two way automata
there is also a decision procedure for  satisfiability and validity
of formulas when past  
modal operators are  included
\cite{V98}. 
Neither a sound and complete 
axiom system nor a sound and complete tableau  proof system
have been developed for this extended fixpoint  logic (which fails 
the finite model property). 
 
\bibliographystyle{eptcs}

\begin{thebibliography}{99}
\providecommand{\urlalt}[2]{\href{#1}{#2}} 
\providecommand{\doi}[1]{doi:\urlalt{http://dx.doi.org/#1}{#1}}

\bibitem{BS07} J. Bradfield and C. Stirling, Modal mu-calculi. In
\emph{Handbook of Modal Logic} ed. P. Blackburn, J. van Benthem and
F. Wolter, 721--756, Elsevier (2007). 
\doi{10.1016/S1570-2464(07)80015-2}
\bibitem{DHL06} C. Dax, M. Hofmann and M. Lange, A proof system for
the linear time $\mu$-calculus. In Procs FSTTCS 2006 LNCS {\bf 4337}
274--285 (2006). 
\doi{10.1007/11944836\_26}
\bibitem{JKS08} G. J\"{a}ger, M. Kretz and T. Studer,
Canonical completeness of infinitary $\mu$. \emph{The Journal of
Logic and Algebraic Programming} {\bf 76} 270--292 (2008).
\doi{10.1016/j.jlap.2008.02.0005}
\bibitem{Koz83} D. Kozen, Results on the propositional $\mu$-calculus.
 \emph{Theor.~Comput.~Sci.} {\bf 27} 333--354 (1983). 
\doi{10.1016/0304-3975(82)90125-6}
\bibitem{Koz86} D. Kozen, A finite model theorem for the propositional
$\mu$-calculus. \emph{Studia Logica} {\bf 47} 233--241 (1986).
\doi{10.1007/BF00370554}
\bibitem{LS01} M. Lange and C. Stirling, Focus games for satisfiability
and completeness of temporal logic. In Procs LICS 2001, 357--365 (2001).
\doi{10.1109/LICS.2001.932511}
\bibitem{Ju09} N. Jungteerapanich, A tableau system for the modal 
$\mu$-calculus. In Procs TABLEAUX 2009, LNAI {\bf 5607} 220--234 (2009).
\doi{10.1007/978-3-642-02716-1\_17}
\bibitem{NiWa96} D. Niwinski and I. Walukiewicz, Games for the $\mu$-calculus.
 \emph{Theor.~Comput.~Sci.} {\bf 163} 99--116 (1996).
\doi{10.1016/0304-3975(95)00136-0}
\bibitem{Sch} D. Schmidt, Data flow analysis is model checking of abstract 
interpretations. In Procs. POPL 1998 38--48 (1998). 
\doi{10.1145/268946.268950}
\bibitem{StW9} C. Stirling and D. Walker, Local model checking in the modal
mu-calculus. \emph{Theor.~Comput.~Sci.} {\bf 89} 161--177 (1991).
\doi{10.1016/0304-3975(90)90110-4}
\bibitem{StE89} R. Streett and E. Emerson, An automata theoretic 
decision procedure for the propositional mu-calculus. \emph{Information
 and Computation}
{\bf 81} 249--264 (1989).  
\doi{10.1016/0890-5401(89)90031-X}
\bibitem{Stu08} T. Studer, On the proof theory of the modal mu-calculus.
\emph{Studia Logica} {\bf 89} 343--363 (2008). 
\doi{10.1007/s11225-008-9133-6}
\bibitem{V98} M. Vardi, Reasoning about the past with two-way automata.
 In Procs ICALP 98, LNCS {\bf 1443} 628--641 (1998).
\doi{10.1007/BFb0055090}
\bibitem{Wal00} I. Walukiewicz, Completeness of Kozen's axiomatisation of 
the propositional $\mu$-calculus. \emph{Information and Computation} 
{\bf 157} 142--182 (2000).
\doi{10.1006/inco.1999.2836}
\end{thebibliography}

\end{document}